\newtheorem{proposition}{Proposition}
\newtheorem{remark}{Remark}
\newcommand{\ren}{\textcolor{black}}
\newcommand{\rteal}{\textcolor{black}}
\begin{document}

\title{An Analysis of Market-to-Market Coordination}


\author{
Weihang Ren, 
\'Alinson S. Xavier,
Fengyu Wang,
Yongpei Guan, and
Feng Qiu

\thanks{Weihang Ren, \'Alinson S. Xavier, and Feng Qiu are with Energy Systems Division, Argonne National Laboratory, Lemont, IL 60439; Fengyu Wang is with New Mexico State University, Las Cruces, NM 88003; Yongpei Guan is with University of Florida, Gainesville, FL 32611.} 
}



\maketitle

\begin{abstract}
The growing usage of renewable energy resources has introduced significant uncertainties in energy generation, \ren{enlarging} challenges for Regional Transmission Operators (RTOs) in managing transmission congestion. To mitigate congestion that affects neighboring regions, RTOs employ a market-to-market (M2M) process \ren{through an iterative method}, in which they exchange real-time security-constrained economic dispatch solutions and communicate requests for congestion relief. While this method provides economic benefits, it struggles with issues like power swings and time delays. To \ren{explore} the \ren{full} potential of M2M enhancements, \ren{in this paper,} we \ren{first} analyze the current M2M \ren{iterative method} practice to better understand its efficacy and identify \ren{places for} improvements. \ren{Then}, we explore enhancements and \ren{develop} an ADMM \ren{method} for the M2M coordination that \ren{optimizes} congestion management. \ren{Specifically, our ADMM method can achieve a minimal cost that is the same as the cost obtained through a centralized model that optimizes multiple markets altogether. Our final case studies,} across a comprehensive set of multi-area benchmark instances, demonstrate the superior performance of the proposed ADMM algorithm for the M2M process. \ren{Meanwhile}, we identify scenarios where the existing M2M process fails to provide solutions \ren{as a by-product}. Finally, \ren{the algorithm is implemented in an open-source package UnitCommitment.jl for easy access by a broader audience.}

\end{abstract}

\begin{IEEEkeywords}
Market-to-market (M2M), security-constrained economic dispatch, shadow price.
\end{IEEEkeywords}

\section*{Nomenclature}
\noindent\textit{A. Sets}
\begin{labeling}{aa allig}
    \item[$\mathcal{I}$] Set of RTOs
    \item[$\mathcal{B}_i$] Set of buses in RTO $i$
    \item[$\mathcal{G}_i$] Set of generators in RTO $i$
    \item[$\mathcal{G}_i(b)$] Set of generators at bus $b$ in RTO $i$
    \item[$\mathcal{L}$] Set of transmission constraints, $\mathcal{L} = \mathcal{L}_{fg} \cup \mathcal{L}_{nf}$
    \item[$\mathcal{L}_{fg}$] Set of flowgate constraints
    \item[$\mathcal{L}_{nf}$] Set of non-flowgate constraints
\end{labeling}
	\textit{B. Parameters}
\begin{labeling}{aa allig}
    \item[$C_g$] Marginal cost of power provided by generator $g$
    \item[$D_b$] Load \ren{amount} at bus $b$
    \item[$F_\ell$] Power flow capacity of line $\ell$
    \item[$F_{\ell,i}$] Power flow capacity of line $\ell$ for RTO $i$
    \item[$P^{max}_g$] Maximum \ren{generation amount of} generator $g$
    \item[$P^{min}_g$] Minimum \ren{generation amount of} generator $g$
    \rteal{\item[$R_{\ell}$] Relief request sent from RTO $1$ to RTO $2$ on flowgate $\ell, \ \ell \in \mathcal{L}_{fg}$}
    \item[$\delta_{b,\ell}$] Shift factor of bus $b$ on line $\ell$
    \item[$\Delta$] \ren{Electricity amount} interchange between two RTOs
    \item[$\lambda_{\ell,i}$] Dual multiplier for line $\ell$ in constraint \eqref{eqn:ad-consensus-1} ($i=1$) or \eqref{eqn:ad-consensus-2} ($i=2$)
    \item[$\rho$] Penalty factor for ADMM
\end{labeling}
\textit{C. Decision Variables}
\begin{labeling}{aa allig}
    \item[$f_{\ell,i}$] Power flow of line $\ell$ from RTO $i$
    \item[$f_{\ell,i}^j$] Power flow of line $\ell$ from RTO $i$ solved by RTO $j$
    \item[$p_g$] \ren{Generation amount} of generator $g$
    \item[$p_g^j$] \ren{Generation amount} of generator $g$ solved by RTO $j$ 
    \item[$s_{\ell,i}$] Excess variable in flow limit constraint for flowgate $\ell$ in RTO $i$
\end{labeling}

\section{Introduction}


A significant issue in the United States power grid is the unexpected transmission congestion and associated real-time congestion costs caused by ``loop flows.'' The power grid across the country is divided into different regions based on geography, each managed by a Regional Transmission Organization (RTO) or other entity. While each RTO operates its system scheduling and ensuring reliability independently, its power grids are physically interconnected. This interconnection results in loop flows, and unintended power flows through neighboring RTOs, leading to unexpected transmission congestion and real-time congestion costs \cite{liu2022joint}. \ren{For instance, the cost of real-time congestion for the Midcontinent Independent System Operator (MISO) surged to \$3.7 billion in 2022, more than tripling since 2020 \cite{poto2023som}.} 

To mitigate this congestion, market-to-market (M2M) coordination is employed among some neighboring RTOs. However, further efforts are needed to enhance the current iterative process and reduce the congestion costs. In \ren{the remainder of} this introduction, we describe the cause of the M2M congestion and its economic impact, review studies on mitigating this issue, identify the remaining challenges, and outline our contributions to this field.

\subsection{M2M Congestion}

The congestion caused by loop flows is inevitable in the real-time market\rteal{, e}lectricity flows along the \ren{transmission lines following Kirchhoff's Law \cite{conejo2018power}}, regardless of regulatory scheduling. Consequently, power generated within one RTO \ren{can} travels through another RTO's transmission lines before reaching its scheduled destination, as illustrated in Fig.~\ref{fig:loop-flow}. This loop flow consumes part of the neighboring RTO's transmission capacity, potentially causing congestion and forcing the neighboring RTO to re-dispatch the energy through other lines at a higher cost. 
\begin{figure}[ht]
    \centering
    \includegraphics[width=0.6\columnwidth]{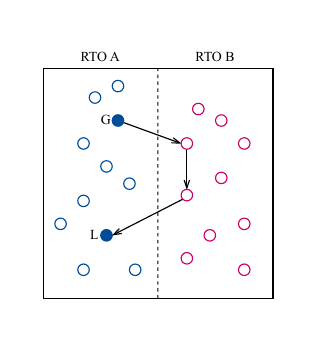}
    \caption{An example of loop flow. The blue circles on the left represent the buses in RTO A, while the pink circles on the right represent the buses in RTO B. The arrows show power generated \ren{from bus G} in RTO A flowing through RTO B before reaching its destination bus \ren{L} in RTO A.}
    \label{fig:loop-flow}
\end{figure}

The congestion costs for RTOs have been significant in recent years, \rteal{emphasizing} the urgency to improve M2M coordination. \ren{Besides the MISO's case described earlier, the t}otal transmission congestion costs for RTOs rose from \$3.7 billion in 2019 to \$12.1 billion in 2022, \ren{and furthermore,} the total congestion costs for the U.S. were estimated to be \ren{up to} \$20.8 billion in 2022 \cite{gs2023report}.

\ren{To mitigate the congestion challenge, s}ince 2003, several neighboring RTO pairs have created Joint Operating Agreements (JOAs), enabling them to efficiently manage constraints affected by both RTOs \cite{spp-miso-joa, pjm-nyiso-joa}. Specifically, the M2M process under JOAs facilitates information exchange between RTOs to reduce overall congestion costs (see details in Section~\ref{sec:m2m}). However, with the increased integration of renewables and fast-\ren{response} units, M2M coordination \ren{can} lead to more oscillation in congestion relief and higher costs.

\subsection{\ren{Related} Literature} \label{sec:review}

Efforts to coordinate multi-region congestion relief in power systems were underway before the development of JOAs. In 1999, a study introduced a protocol for RTOs to communicate information about their net loads outside of their region and to re-dispatch considering loads from other regions \cite{cadwalader1999coordinating}. While this framework managed congestion through energy trading, it did not account for congestion induced by loop flows. To address the problem of loop flows, a later study developed a market-flow-based method, which utilizes market flow to measure the impact of energy market operation on constraints in the coordinated inter-regional congestion management process \cite{luo2014assessment}.

With the creation of JOAs, neighboring RTOs began managing the loop-flow-induced congestion by exchanging shadow prices and relief request information \cite{spp-miso-joa, pjm-nyiso-joa}. This iterative approach helped align border prices and reduce overall congestion costs through the M2M coordination process. However, due to imprecise grid information, the widespread use of fast-\ren{response} units, and delays between dispatch scheduling and resource response, RTOs can over- or under-dispatch, leading to oscillation in power flow, known as ``power swing'' \cite{chen2017address}. To address this issue, researchers proposed an adaptive relief request to stabilize power flow changes from re-dispatch, partially mitigating the oscillation problem. Subsequent research further examined the causes of power swings and raised concerns about market settlement under \ren{the} current M2M coordination \ren{in} \cite{yang2021real}. To improve M2M coordination, \ren{the study in this paper} refined relief request calculations by incorporating predictive flow adjustments, enhancing market flow predictions with data analytics, and improving the adjustable adder logic \ren{in the relief request calculation}.

As the complexity of modern electricity markets grows, various decentralized methods for multi-region coordination have been investigated, addressing challenges in economic dispatch, optimal power flow, and unit commitment problems \cite{wang2017distributed, guo2017coordinated, ji2017multi}. Among these methods, the alternative-direction method of multipliers (ADMM) has proven effective in power system applications. It has demonstrated reliable performance in solving large instances efficiently, without significant convergence or numerical issues \cite{xavier2024decomposable}. However, while the M2M coordination process is suitable for decentralized algorithms, \ren{as compared to other applications,} it is crucial to uphold data privacy between markets \cite{khanabadi2017market}.

\subsection{Challenge and Contribution}
\ren{Recent literature has raised concerns about the power swing issue inherent in the current M2M coordination process, emphasizing the urgent need for solutions. While some efforts have concentrated on refining relief request calculations to facilitate implementation within existing systems, a significant gap persists. Specifically, there is a lack of open-source tools that thoroughly evaluate the M2M process for congestion relief, and considerable opportunities remain to further optimize this process in light of the substantial congestion costs faced by many RTOs.}

To address this gap, this paper provides an analysis of the current M2M process using benchmark cases and proposes a decentralized algorithm to more effectively mitigate congestion. This approach aims to improve the evaluation and management of congestion relief, considering both the limitations of existing methods and the potential benefits of decentralized algorithms. The main contributions are summarized as follows:
\begin{itemize}
    \item[1.] Formalize the M2M coordination problem and develop a centralized formulation to determine a lower bound for the problem;
    
    \item[2.] Assess the current M2M coordination process using realistic benchmark instances, highlighting typical issues and inefficiencies;

    \item[3.] Propose a decentralized method for M2M coordination using ADMM, guaranteeing convergence to the lower bound provided by the centralized method, \ren{so as to achieve an optimal solution};

    \item[4.] Implement the existing M2M coordination algorithm and propose centralized and decentralized formulations as open-source tools.
\end{itemize}


The remaining part of the paper is organized as follows. Section~\ref{sec:central} presents the centralized economic dispatch model for two RTOs that provides a lower bound for congestion mitigation. Section~\ref{sec:m2m} describes the iterative method used in the current M2M coordination process in practice. Section~\ref{sec:admm} proposes an ADMM formulation that solves the M2M coordination in a distributed fashion. Finally, Section~\ref{sec:experiments} demonstrates our implementation of proposed formulations on customized benchmark instances and evaluates the performance of the current iterative method and proposed ADMM solution.

\section{Centralized Formulation} \label{sec:central}

We first formalize the M2M coordination problem and present its model from a centralized perspective \ren{for the real-time market clearance}, where the security-constrained economic dispatch (SCED) problems of two neighboring RTOs are solved together. By sharing all network information, this approach aims to achieve a minimum total cost.

The M2M coordination problem for congestion is a multi-area SCED in the real-time market. Each area has its own set of loads and generators\ren{, and t}he loads in each area are served by its generators. \ren{Besides this, t}he two areas share transmission lines due to their interconnection. The primary decisions in the \ren{centralized} multi-area SCED are the power output of each generator and the allocation of transmission capacity for each region.

The centralized model is formulated by integrating the networks of both RTOs and \ren{ensuring the load balance for the combined area in the integrated SCED} \rteal{model}:
\begin{subeqnarray} \label{model:cen}
    &\hspace{-0.5in}\min & 
    \sum_{g \in \mathcal{G}_1} C_g p_g + \sum_{g \in \mathcal{G}_2} C_g p_g \slabel{eqn:cen-obj}\\
    &\hspace{-0.5in}\mbox{s.t.}
    & P_g^{min} \le p_g \le P_g^{max}, \ \forall g \in \mathcal{G}_1 \cup \mathcal{G}_2, \slabel{eqn:cen-gen} \\
    && \sum_{g \in \mathcal{G}_1} p_g = \sum_{b \in \mathcal{B}_1} D_b + \Delta, \slabel{eqn:cen-balance-1} \\
    && \sum_{g \in \mathcal{G}_2} p_g = \sum_{b \in \mathcal{B}_2} D_b - \Delta, \slabel{eqn:cen-balance-2} \\
    && f_{\ell,1} = \sum_{b \in \mathcal{B}_1} \delta_{b,\ell} \left( \sum_{g \in \mathcal{G}_{1}(b)} p_g - D_b \right), \ \forall \ell \in \mathcal{L}, \slabel{eqn:cen-flow-1} \\
    && f_{\ell,2} = \sum_{b \in \mathcal{B}_2} \delta_{b,\ell} \left( \sum_{g \in \mathcal{G}_{2}(b)} p_g - D_b \right), \ \forall \ell \in \mathcal{L}, \slabel{eqn:cen-flow-2} \\
    && -F_\ell \le f_{\ell,1} + f_{\ell,2} \le F_\ell, \ \forall \ell \in \mathcal{L}, \slabel{eqn:cen-flow-limit}
\end{subeqnarray}
where the objective \eqref{eqn:cen-obj} minimizes the total generation cost across the integrated network, constraint \eqref{eqn:cen-gen} ensures each generator operates within its output limits, constraints \eqref{eqn:cen-balance-1} and \eqref{eqn:cen-balance-2} enforce power balance within each RTO, considering a predetermined power interchange $\Delta$, constraints \eqref{eqn:cen-flow-1} and \eqref{eqn:cen-flow-2} calculate the power flow from each RTO on each transmission line, and constraint \eqref{eqn:cen-flow-limit} restricts that the total flow on each line does not exceed its capacity. 

This centralized formulation offers a clear advantage over treating the two RTOs separately or combining their networks into one. Unlike handling individual SCED problems separately for each region, the centralized method integrates both networks, accounting for the total flow on transmission lines and optimizing congestion management and cost minimization across the interconnected RTOs. Besides, while an ideal solution for M2M coordination might involve combining two RTOs in one SCED, allowing any generator to serve any load across two RTOs. However, this combined SCED requires sharing network topology, generator profiles, and load data. This sharing compromises privacy and is incompatible with the separate operation\ren{s} of RTOs, providing limited insights into the M2M process. In contrast, the centralized method \ren{described in this section} considers the needs within each RTO and bridges them together, providing \ren{the best possible} optimal generation dispatch and flow allocation \ren{for the current market framework. Accordingly, w}e have the following proposition holds.
\begin{proposition}\label{prop:zzz}
    Let $z_0$ be the optimal cost of the combined SCED for neighboring RTOs, $z_1$ \ren{be} the optimal cost of the centralized method \ren{described in formulation \eqref{model:cen}}, and $z_2$ \ren{be} the cost from any M2M coordination. Then, \ren{we have}
    \begin{align}
        z_0 \le z_1 \le z_2.
    \end{align}
\end{proposition}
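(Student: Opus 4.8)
The plan is to prove the two inequalities $z_0 \le z_1$ and $z_1 \le z_2$ separately, in each case by a feasible-region containment argument. This works because all three quantities are optimal or attained values of the \emph{same} objective \eqref{eqn:cen-obj}, so it suffices to compare the feasible sets over which that objective is evaluated: enlarging the feasible set can only lower the optimum, and restricting the dispatch to a single attained point can only raise the cost.

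For the left inequality $z_0 \le z_1$, I would first observe that the combined SCED (which lets any generator serve any load across both RTOs) is a relaxation of the centralized formulation \eqref{model:cen}. The only structural difference lies in the power balance: \eqref{model:cen} imposes the two separate equalities \eqref{eqn:cen-balance-1} and \eqref{eqn:cen-balance-2} with a fixed interchange $\Delta$, whereas the combined SCED imposes only the single aggregate balance $\sum_{g \in \mathcal{G}_1 \cup \mathcal{G}_2} p_g = \sum_{b \in \mathcal{B}_1 \cup \mathcal{B}_2} D_b$. Adding \eqref{eqn:cen-balance-1} and \eqref{eqn:cen-balance-2} cancels $\Delta$ and yields exactly this aggregate balance, so every point feasible for \eqref{model:cen} is also feasible for the combined SCED (the generation bounds \eqref{eqn:cen-gen}, the flow definitions \eqref{eqn:cen-flow-1}--\eqref{eqn:cen-flow-2}, and the joint limit \eqref{eqn:cen-flow-limit} are common to both). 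Minimizing \eqref{eqn:cen-obj} over the superset therefore gives $z_0 \le z_1$.

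For the right inequality $z_1 \le z_2$, I would argue that any dispatch produced by an M2M coordination is a feasible point of \eqref{model:cen}. The M2M process maintains local feasibility for each RTO, respecting the generator bounds and the per-region balance with interchange $\Delta$, and it allocates each line's capacity between the two regions so that the per-region limits sum to the total, i.e. $F_{\ell,1} + F_{\ell,2} = F_\ell$. Since the physical flow on line $\ell$ is the superposition $f_{\ell,1} + f_{\ell,2}$, the per-region limits imply the joint limit \eqref{eqn:cen-flow-limit}. Hence any such M2M dispatch satisfies all constraints of \eqref{model:cen}, and because $z_1$ is the minimum of \eqref{eqn:cen-obj} over this feasible set, $z_1 \le z_2$ for the cost $z_2$ of any M2M solution.

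The main obstacle is the right inequality: unlike the clean algebraic containment used for $z_0 \le z_1$, establishing that every M2M solution is feasible for \eqref{model:cen} requires invoking the precise mechanics of the iterative process from Section~\ref{sec:m2m}, in particular that the relief-request exchange and capacity split preserve the aggregate line limit $F_\ell$ rather than loosening it (care is also needed regarding the excess variables $s_{\ell,i}$, which must vanish at a genuinely feasible converged dispatch for the argument to apply cleanly). I would therefore state this feasibility explicitly as the crux, verifying from the M2M definition that the converged dispatch respects \eqref{eqn:cen-flow-limit} with no residual overload; once that is secured, the inequality follows immediately from the optimality of $z_1$.
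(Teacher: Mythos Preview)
Your proposal is correct and follows essentially the same feasible-region containment argument as the paper: show that every centralized solution is feasible for the combined SCED (hence $z_0\le z_1$) and that every M2M dispatch is feasible for \eqref{model:cen} (hence $z_1\le z_2$). The paper's own proof is in fact much terser---it simply asserts both feasibility inclusions without the algebraic verification or the caveat about the excess variables $s_{\ell,i}$ that you rightly flag---so your version is a strictly more careful rendering of the same idea.
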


\begin{proof}
   Any solution from the centralized method is also feasible for the combined SCED, so $z_0 \le z_1$. Furthermore, since M2M solutions are feasible for the centralized model \ren{\eqref{model:cen}. Thus}, $z_1 \le z_2$ \ren{and the conclusion holds}.
\end{proof}

\begin{remark}
    \ren{Based on Proposition~\ref{prop:zzz}, t}he centralized model provides a lower bound for the M2M coordination problem.
\end{remark}

\section{Current Iterative Method} \label{sec:m2m}

Currently, several RTOs, \ren{including MISO, PJM, SPP, and NYISO,} employ an iterative method specified in JOAs for M2M coordination \ren{\cite{spp-miso-joa, pjm-nyiso-joa}}. This approach involves identifying flowgates--binding constraints that could become congestion points. A transmission constraint is considered an M2M flowgate if it is operated by one RTO but significantly affected by another, in a base or a contingency-specific scenario. The RTO responsible for monitoring and control of a flowgate is known as ``Monitoring RTO'' (MRTO), while the RTO influencing the flowgate is termed ``Non-monitoring RTO'' (NMRTO). The MRTO and NMRTO exchange shadow prices, relief requests, and other relevant information iteratively to manage and alleviate congestion.



\rteal{As described in} \cite{spp-miso-joa} and \cite{chen2017address}, \rteal{for the initial step, MRTO and NMRTO are given a portion of the physical limit of each flowgate. That is, for each $\ell \in \mathcal{L}_{fg}$, we have $F_{\ell,1}$ allocated to MRTO and $F_{\ell,2}$ allocated to NMRTO, with $F_{\ell,1} + F_{\ell,2} = F_{\ell}$, by assuming RTO $1$ is MRTO and RTO $2$ is NMRTO. Both RTOs initially solve their respective RTO optimization problems with these given limits for the flowgates and obtain the corresponding shadow prices $\lambda_{\ell,1}$ and $\lambda_{\ell,2}$ for each flowgate $\ell \in \mathcal{L}_{fg}$. Starting from here, MRTO solves the following problem:}
\begin{subeqnarray} \label{model:iter}
    &\hspace{-0.5in}\min & 
    \sum_{g \in \mathcal{G}_1} C_g p_g + \lambda_{\ell,2} s_{\ell,1} \slabel{eqn:iter-obj}\\
    &\hspace{-0.5in}\mbox{s.t.}
    & P_g^{min} \le p_g \le P_g^{max}, \ \forall g \in \mathcal{G}_1, \slabel{eqn:iter-gen} \\
    && \sum_{g \in \mathcal{G}_1} p_g = \sum_{b \in \mathcal{B}_1} D_b, \slabel{eqn:iter-balance} \\
    && \left| \sum_{b \in \mathcal{B}_1} \delta_{b,\ell} \left( \sum_{g \in \mathcal{G}_{1}(b)} p_g - D_b \right) \right| \le F_\ell, \nonumber\\ 
    &&\hspace{1.9in} \forall \ell \in \mathcal{L}_{nf}, \slabel{eqn:iter-flow-nf} \\
    && \left| \sum_{b \in \mathcal{B}_1} \delta_{b,\ell} \left( \sum_{g \in \mathcal{G}_{1}(b)} p_g - D_b \right) \right| \le F_{\ell\rteal{,1}} + s_{\ell,1}, \nonumber\\ 
    &&\hspace{1.9in} \forall \ell \in \mathcal{L}_{fg}. \slabel{eqn:iter-flow-fg}
\end{subeqnarray}
\ren{By solving the above model \eqref{model:iter}, the shadow price for the flowgate in RTO $1$, i.e., $\lambda_{\ell,1}$, is updated. Meanwhile, \rteal{because MRTO monitors and controls the flowgate, it calculates relief requests, asking for more physical limit from NMRTO. For instance}, the relief request is calculated as follows \cite{chen2017address}:
\begin{align} \label{eqn:relief}
    \rteal{R_{\ell} = (f_{\ell,1} + f_{\ell,2}) - F_{\ell}} + Adder, \quad \forall \ell \in \mathcal{L}_{fg},
\end{align}
where
$Adder$ is an additional amount depending on the scenario. When MRTO is binding and NMRTO shadow price is cheaper, an $Adder$, up to 20\% of the flow limit, is added to accelerate the convergence of shadow prices. After this, MRTO sends its shadow prices for the flowgates and relief requests to NMRTO.} 

The NMRTO then solves its SCED considering the MRTO's shadow prices and relief requests. \rteal{Specifically, the NMRTO solves the following problem:}
\begin{subeqnarray} \label{model:iter-n}
    &\hspace{-0.1in}\min & 
    \sum_{g \in \mathcal{G}_2} C_g p_g + \lambda_{\ell,1} s_{\ell,2} \slabel{eqn:iter-obj-n}\\
    &\hspace{-0.1in}\mbox{s.t.}
    & P_g^{min} \le p_g \le P_g^{max}, \ \forall g \in \mathcal{G}_2, \slabel{eqn:iter-gen-n} \\
    && \sum_{g \in \mathcal{G}_2} p_g = \sum_{b \in \mathcal{B}_2} D_b, \slabel{eqn:iter-balance-n} \\
    && \left| \sum_{b \in \mathcal{B}_2} \delta_{b,\ell} \left( \sum_{g \in \mathcal{G}_{2}(b)} p_g - D_b \right) \right| \le F_\ell, \ \forall \ell \in \mathcal{L}_{nf}, \slabel{eqn:iter-flow-nf-n} \\
    && \left| \sum_{b \in \mathcal{B}_2} \delta_{b,\ell} \left( \sum_{g \in \mathcal{G}_{2}(b)} p_g - D_b \right) \right| \le (F_{\ell,2} - R_{\ell}) + s_{\ell,2}, \nonumber\\ 
    &&\hspace{2.2in} \forall \ell \in \mathcal{L}_{fg}. \slabel{eqn:iter-flow-fg-n}
\end{subeqnarray}
\rteal{By solving the above model \eqref{model:iter-n}, the shadow price for the flowgate in RTO $2$, i.e., $\lambda_{\ell,2}$, is updated. If $\lambda_{\ell,2} < \lambda_{\ell,1}$, this indicates that} NMRTO can mitigate the congestion at a lower cost than MRTO. \rteal{In this case,} NMRTO adjusts its flow on the flowgate, and sends its shadow prices back to MRTO. \rteal{If $\lambda_{\ell,2} \ge \lambda_{\ell,1}$, NMRTO still sends its updated shadow prices to MRTO, allowing MRTO to adjust the relief request.} This process repeats until the shadow prices converge. 


While this iterative method aims to use the relatively cheaper units between the two RTOs to alleviate congestion until marginal costs for relief converge, issues like ``power swing'' can arise due to improper relief request amounts and scheduling delays, as reviewed in Section~\ref{sec:review}. We will simulate this iterative method on realistic instances and report the performance in Section~\ref{sec:experiments}.

\section{An ADMM Approach for M2M} \label{sec:admm}

To leverage the advantage of distributed optimization on M2M coordination, we develop an ADMM algorithm for congestion relief between two RTOs. 

ADMM decomposes a complex optimization problem into smaller subproblems and solves them distributively using Lagrangian multipliers, which enhances efficiency. This algorithm guarantees convergence for convex optimization problems like SCED, is easy to implement, highly flexible, and has demonstrated success in various applications \cite{xavier2024decomposable}. Crucially, ADMM retains most information within subproblems, addressing the privacy concerns of RTOs.

For M2M coordination, the centralized formulation \eqref{model:cen} can be split into two subproblems, each for one RTO. The ADMM algorithm is then applied to solve these subproblems. The ADMM \ren{master} formulation is as follows:

\begin{subeqnarray} \label{model:admm}
    &\hspace{-0.5in}\min & 
    \sum_{g \in \mathcal{G}_1} C_g p_g^1 + \sum_{g \in \mathcal{G}_2} C_g p_g^2 \slabel{eqn:ad-obj}\\
    &\hspace{-0.5in}\mbox{s.t.}
    & P_g^{min} \le p_g^1 \le P_g^{max}, \ \forall g \in \mathcal{G}_1, \slabel{eqn:ad-gen-1} \\
    && P_g^{min} \le p_g^2 \le P_g^{max}, \ \forall g \in \mathcal{G}_2, \slabel{eqn:ad-gen-2} \\
    && \sum_{g \in \mathcal{G}_1} p_g^1 = \sum_{b \in \mathcal{B}_1} D_b + \Delta, \slabel{eqn:ad-balance-1} \\
    && \sum_{g \in \mathcal{G}_2} p_g^2 = \sum_{b \in \mathcal{B}_2} D_b - \Delta, \slabel{eqn:ad-balance-2} \\
    && f_{\ell,1}^1 = \sum_{b \in \mathcal{B}_1} \delta_{b,\ell} \left( \sum_{g \in \mathcal{G}_1(b)} p_g^1 - D_b \right), \ \forall \ell \in \mathcal{L}, \slabel{eqn:ad-flow-1} \\
    && f_{\ell,2}^2 = \sum_{b \in \mathcal{B}_2} \delta_{b,\ell} \left( \sum_{g \in \mathcal{G}_2(b)} p_g^2 - D_b \right), \ \forall \ell \in \mathcal{L}, \slabel{eqn:ad-flow-2} \\
    && -F_\ell \le f_{\ell,1}^1 + f_{\ell,2}^1 \le F_\ell, \ \forall \ell \in \mathcal{L}_{fg}, \slabel{eqn:ad-fg-flow-limit-1} \\
    && -F_\ell \le f_{\ell,1}^2 + f_{\ell,2}^2 \le F_\ell, \ \forall \ell \in \mathcal{L}_{fg}, \slabel{eqn:ad-fg-flow-limit-2} \\
    && -F_{\ell,1} \le f_{\ell,1}^1 \le F_{\ell,1}, \ \forall \ell \in \mathcal{L}_{nf}, \slabel{eqn:ad-nf-flow-limit-1} \\
    && -F_{\ell,2} \le f_{\ell,2}^2 \le F_{\ell,2}, \ \forall \ell \in \mathcal{L}_{nf}, \slabel{eqn:ad-nf-flow-limit-2} \\
    && f_{\ell,1}^1 = f_{\ell,1}^2, \ \forall \ell \in \mathcal{L}_{fg}, \slabel{eqn:ad-consensus-1} \\
    && f_{\ell,2}^1 = f_{\ell,2}^2, \ \forall \ell \in \mathcal{L}_{fg}, \slabel{eqn:ad-consensus-2}
\end{subeqnarray}
where the objective \eqref{eqn:ad-obj} and constraints \eqref{eqn:ad-gen-1}-\eqref{eqn:ad-nf-flow-limit-2} are analogous to those in model~\eqref{model:cen}, with decision variables separated into two RTOs marked by their superscripts. \ren{Note that the transmission lines are separated into flowgate set and non-flowgate set, i.e., $\mathcal{L} = \mathcal{L}_{fg} \cup \mathcal{L}_{nf}$.} Constraints \eqref{eqn:ad-consensus-1} and \eqref{eqn:ad-consensus-2} are consensus equations ensuring the same variable\ren{s} from two RTOs are equal.

\ren{Now}, we dualize and penalize the consensus constraints \eqref{eqn:ad-consensus-1} and \eqref{eqn:ad-consensus-2} to create the augmented Lagrangian relaxation \ren{(subproblems)}:
\begin{subeqnarray} \label{model:alr}
    &\hspace{-0.5in} \min\limits_{p^1,p^2,f^1,f^2} & \sum_{g \in \mathcal{G}_1} C_g p_g^1 + \sum_{g \in \mathcal{G}_2} C_g p_g^2 \nonumber \\
    && + \sum_{\ell \in \mathcal{L}} \left[ \lambda_{\ell,1} (f_{\ell,1}^1 - f_{\ell,1}^2) + \lambda_{\ell,2} (f_{\ell,2}^1 - f_{\ell,2}^2) \right] \nonumber \\
    && + \frac{\rho}{2} \sum_{\ell \in \mathcal{L}} \left[ (f_{\ell,1}^1 - f_{\ell,1}^2)^2 + (f_{\ell,2}^1 - f_{\ell,2}^2)^2 \right] \slabel{eqn:alr-obj}\\
    &\hspace{-0.5in}\mbox{s.t.}
    & \eqref{eqn:ad-gen-1}, \eqref{eqn:ad-balance-1}, \eqref{eqn:ad-flow-1}, \eqref{eqn:ad-fg-flow-limit-1}, \eqref{eqn:ad-nf-flow-limit-1}, \slabel{eqn:alr-cons-1} \\
    && \eqref{eqn:ad-gen-2}, \eqref{eqn:ad-balance-2}, \eqref{eqn:ad-flow-2}, \eqref{eqn:ad-fg-flow-limit-2}, \eqref{eqn:ad-nf-flow-limit-2}, \slabel{eqn:alr-cons-2}
\end{subeqnarray}
where $\lambda = (\lambda_{\ell,1}, \lambda_{\ell,2})$ represents dual multipliers for consensus constraints. As a shorthand, let $L_{\rho}(f^1, f^2, \lambda)$ denote the objective function \eqref{eqn:alr-obj}, where $f^i=(f_{\ell,1}^i, f_{\ell,2}^i)$ represents the power flow variables for RTO $i$ for $i=1,2$. Constraints \eqref{eqn:alr-cons-1} and \eqref{eqn:alr-cons-2} only contain constraints for RTO 1 and 2, respectively. 

\ren{In the above formulation \eqref{model:alr}, since} $p^1$ and $p^2$ are only used by their respective RTO\ren{s}, \ren{these two variables} are omitted \ren{in the representation of $L_\rho (\cdot)$} for simplicity when addressing this relaxation \ren{in the rest of the paper}. Note that when $f^1$ \ren{or} $f^2$ is fixed at some $\bar{f}$, $L_{\rho}(\bar{f}, f^2, \lambda)$ \ren{or} $L_{\rho}(f^1, \bar{f}, \lambda)$ is a problem with constraints only for RTO 2 \ren{or} RTO 1. 

This relaxation \ren{\eqref{model:alr}} can be solved using the ADMM algorithm described in Algorithm~\ref{algo:admm}, with the iteration number $k$ indicated by the superscript in parentheses. The stopping criteria first check the solution feasibility by comparing the global residual with the minimum residual requirement, where the global residual is calculated as the sum of the difference between the solutions of $f^1/f^2$ and their target values $\bar{f}^1/\bar{f}^2$, \ren{i.e., $|f^1-\bar{f^1}| + |f^2-\bar{f^2}|$}. \ren{Once the global residual is smaller than the minimum residual threshold, indicating that the solution has reached the feasibility requirement}, the stopping criteria further evaluate whether the absolute change in the objective function is below a specified minimum improvement threshold. The algorithm terminates when both criteria are satisfied.

\begin{algorithm}
\caption{ADMM for M2M Coordination}
\label{algo:admm}
\begin{algorithmic}[1]
\STATE \textbf{Initialize} starting point $\bar{f}^{(0)}$; let $\lambda^{(0)} \leftarrow 0, k \leftarrow 0$;
\WHILE{stopping criteria are not satisfied}
    \STATE RTO 1 solves $\min \ L_{\rho}(f^{1}, \bar{f}^{(k)}, \lambda^{(k)})$ subject to \eqref{eqn:alr-cons-1} and gets $f^{1,(k+1)}$;
    \STATE RTO 2 solves $\min \ L_{\rho}(\bar{f}^{(k)}, f^{2}, \lambda^{(k)})$ subject to \eqref{eqn:alr-cons-2} and gets $f^{2,(k+1)}$;
    \STATE $\bar{f}^{(k+1)} \leftarrow (f^{1,(k+1)} + f^{2,(k+1)}) / 2$;
    \STATE $\lambda^{(k+1)} \leftarrow \lambda^{(k)} + \ren{\rho (f^{1,(k+1)} - \bar{f}^{(k+1)}) + \rho (f^{2,(k+1)} - \bar{f}^{(k+1)})}$;
    \STATE $k \leftarrow k + 1$;
\ENDWHILE
\end{algorithmic}
\end{algorithm}




\begin{proposition} \label{prop:convergence}
    The ADMM \ren{framework} \eqref{model:admm} \ren{and \eqref{model:alr} implemented in Algorithm \ref{algo:admm}} converge\ren{s} to an optimal solution for both markets \ren{as shown in the centralized model \eqref{model:cen},} and meanwhile the \ren{corresponding} LMP price\ren{s} converge to \ren{the} price\ren{s} that \ren{are} optimal for each market \ren{if the optimal solution and the corresponding dual value are unique}, albeit the price\ren{s} for both markets might not be the same at the shared buses.
\end{proposition}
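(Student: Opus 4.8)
The plan is to split the claim into three parts: (i) primal convergence of Algorithm~\ref{algo:admm} to the optimum of the centralized model \eqref{model:cen}; (ii) convergence of the recovered locational marginal prices (LMPs) to prices that are optimal for each market under the stated uniqueness hypothesis; and (iii) the observation that these per-market prices may disagree at shared buses. For part (i), I would first note that \eqref{model:cen} is a linear, hence convex, program, and that \eqref{model:admm} is an exact reformulation of it: enforcing the consensus constraints \eqref{eqn:ad-consensus-1}--\eqref{eqn:ad-consensus-2} collapses the duplicated flow variables $(f^1, f^2)$ back onto the single centralized flows, so the two problems share the optimal value $z_1$. I would then cast the augmented-Lagrangian scheme \eqref{model:alr} in the standard two-block (consensus) ADMM template, with one block per RTO coupled only through the consensus equations, and verify the usual hypotheses (proper, closed, convex subproblem objectives over nonempty polyhedral feasible sets) required to invoke the classical ADMM convergence theorem \cite{xavier2024decomposable}. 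This yields that the primal residual $|f^1 - \bar{f}| + |f^2 - \bar{f}|$ vanishes and the objective converges to $z_1$, so every limit point is feasible and optimal for \eqref{model:cen}.

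For part (ii), the LMPs are precisely the multipliers of the balance constraints \eqref{eqn:ad-balance-1}--\eqref{eqn:ad-balance-2} in each RTO's subproblem, combined with the congestion contributions propagated through the flow-definition constraints \eqref{eqn:ad-flow-1}. The substantive work is to show that these subproblem multipliers themselves converge. I would argue that as $\bar{f}^{(k)} \to \bar{f}^{\,*}$ and $\lambda^{(k)} \to \lambda^{*}$, each RTO's subproblem converges to a limiting subproblem in which the quadratic penalty gradient vanishes at consensus, so its stationarity and complementary-slackness conditions reduce to those of \eqref{model:cen} restricted to that RTO. Invoking the uniqueness of the optimal primal--dual pair then forces the limiting subproblem, and hence the sequence of subproblem multipliers, to converge to the unique market-optimal duals; the LMP at a bus $b$ of RTO $i$ is read off by the usual decomposition $\mathrm{LMP}_b = \nu_i - \sum_{\ell} \mu_{\ell}^{i} \delta_{b,\ell}$, where $\nu_i$ is the balance multiplier and $\mu_\ell^i$ the shadow price of the binding transmission limit.

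For part (iii), I would examine the flowgate congestion term. Each RTO carries its own copy of the flowgate limit -- \eqref{eqn:ad-fg-flow-limit-1} for RTO~1 and \eqref{eqn:ad-fg-flow-limit-2} for RTO~2 -- and hence its own shadow price $\mu_\ell^1, \mu_\ell^2$. Although consensus forces the physical flows $f_{\ell,1}^1 + f_{\ell,2}^1 = f_{\ell,1}^2 + f_{\ell,2}^2$ to agree in the limit (so the two constraints share binding status), the two shadow prices are pinned by separate subproblem KKT systems linked only through the consensus multiplier $\lambda_\ell$, which absorbs any gap; uniqueness of the \emph{centralized} primal--dual pair does not force the split quantities $\mu_\ell^1, \mu_\ell^2, \lambda_\ell$ to satisfy $\mu_\ell^1 = \mu_\ell^2$. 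Consequently the congestion component $\sum_\ell \mu_\ell^i \delta_{b,\ell}$ can differ across $i = 1, 2$ at a bus $b$ influenced by the shared flowgate, establishing that the converged prices may disagree there; a small two-bus-per-RTO instance would make this concrete.

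I expect the principal obstacle to lie in part (ii): ADMM's textbook guarantees concern primal convergence and the consensus multipliers $\lambda$, not the balance-constraint multipliers that define the LMPs. The crux is therefore to transfer convergence to this secondary set of duals, which requires controlling the subproblem KKT systems along the iteration and is exactly where the uniqueness assumption does the heavy lifting; without it, the subproblem multipliers could oscillate among alternative dual optima even while the primal iterates converge.
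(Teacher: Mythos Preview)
Your proposal is correct and follows essentially the same approach as the paper: observe that \eqref{model:admm} is a linear (hence convex) reformulation of \eqref{model:cen}, invoke the classical ADMM convergence theorem for convex problems, and then appeal to uniqueness of the optimal primal--dual pair to pin down the LMPs. The paper's own proof is considerably terser --- it cites \cite{boyd2011distributed} for convergence and simply asserts the dual-convergence and shared-bus claims --- so your parts (ii) and (iii) supply substantially more mechanism than the paper does, but the underlying strategy is the same.
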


\begin{proof}
    The ADMM formulation for M2M coordination in \eqref{model:admm} is a linear program with continuous variables, making it a convex optimization problem. It is \ren{shown in \cite{boyd2011distributed}} that the ADMM algorithm guarantees convergence for convex problems. Furthermore, since our ADMM formulation  \eqref{model:admm} is derived directly from the centralized model \eqref{model:cen}, it inherits the properties of the centralized model, ensuring convergence to an optimal solution for both markets. \ren{The corresponding LMP prices are associated dual values of the optimal solution for each market, which converge to the optimal dual value if the optimal value is unique.} These prices are optimal for each market, although they may not coincide with the shared buses.
\end{proof}


\section{Computational Experiments} \label{sec:experiments}

This section begins by detailing the construction process for M2M coordination instances with various congestion scenarios. We then provide a comprehensive description of the implementation of both the current iterative method and our proposed ADMM algorithm. For each method, we evaluate their performance in mitigating congestion and discuss the observed issues and practical implications.

All algorithms \ren{were} implemented in Julia 1.10. The SCED framework \ren{was} constructed using UnitCommitment.jl \cite{xavier2024unitcommitment}, modified with JuMP 1.22, and solved using Gurobi 11.0. For ADMM, each instance is solved sequentially, with one process allocated per RTO. Inter-process communication is managed via MPI. The experiments \ren{were} conducted on a computer powered by an Intel Xeon W with 192GB of RAM and an AMD Radeon Pro W5500X.

\subsection{Instance Construction}
Currently, there is a lack of benchmark instances tailored for M2M coordination studies. To address this gap, we developed an automated process to construct an M2M instance from any power system instance that follows the ``common data format'' defined in UnitCommitment.jl, including flowgate identification. 

This process involves three steps: decomposing the given power system network, identifying the most congested flowgate, and splitting the flow capacities into two RTO markets. 

\subsubsection{Decomposing the given power system network}
\ren{T}o simulate the topology of two neighboring RTOs, we create a directed graph based on the given buses and transmission lines. This graph is \ren{first} partitioned into two interconnected smaller networks using Metis, \ren{an advanced graph partition library \cite{karypis1998software}}. Buses in one partition are designated as RTO 1, and the remaining buses are assigned to RTO 2. We then make two copies of this labeled power grid. For the first copy, we remove the load and generator information related to RTO 2 buses and halve the total grid reserve requirements. Similarly, we adjust the second copy by removing the load and generator information \ren{related to} RTO 1 and halving the reserve requirements.

\subsubsection{Identifying the most congested flowgate}
\ren{W}e determine the flowgate based on shift factors and market flows. Transmission lines originating from one RTO but influenced by a generator with a shift factor greater than 5\% from the other RTO are marked as potential flowgates. To obtain market flows, we solve the centralized model \eqref{model:cen} on this network, which provides the flows from two RTOs on each line. We rank potential flowgates by their congestion ratio. The congestion ratio for line $\ell$ is calculated as follows:
\begin{equation}
    \mbox{\it Congestion Ratio} = f_{\ell, 1} f_{\ell, 2} / |f_{\ell, 1} + f_{\ell, 2}|.
\end{equation}
A higher congestion ratio indicates greater congestion and a more balanced contribution of flows from both RTOs. The transmission line with the highest congestion ratio is selected as the flowgate of interest. 

\subsubsection{Allocate flow capacities to the two RTOs}
\ren{To} allocate flow capacities to the two RTOs for each line, \ren{f}or transmission lines that are not the flowgate, if there is no explicit limit, we assign a sufficiently large limit to both RTOs. If both market flows are below half of this limit, we allocate half of the limit to each RTO. If one RTO uses more than half of the limit, we assign 110\% of its market flow to that RTO, and any remaining capacity goes to the other RTO if it exceeds its market flow. For the flowgate, each RTO receives half of the original flow limit. 

Th\ren{e above} approach generates a ``standard instance'' file suitable for M2M coordination studies from \ren{a} given power system instance.

Additionally, we create two more instances \ren{for each given power system network} to explore potential issues \ren{faced by different M2M coordination methods}. One is a ``lower-limit instance'', which reduces the flowgate's total capacity by 5\%, thereby challenging the M2M coordination process with a tighter flowgate capacity. The other is an ``opposite-flow instance'', which selects a flowgate with a low negative congestion ratio, indicating opposite flow directions from the two RTOs on this line, and uses the total market flow as the line capacity.

\ren{Finally, i}n our case study, we generated M2M instances using six popular power system instances \ren{from MATPOWER test cases compiled in \cite{xavier2024unitcommitment}}, ranging from 1,951 to 6,468 buses. The network information is summarized in Table~\ref{tab:network}.

\begin{table}[ht]
\centering
\caption{Instance Summary}
\begin{tabular}{cccc}\toprule
Network     & Buses & Generators & Lines \\\midrule
case1951rte & 1,951 & 390        & 2,596 \\
case2383wp  & 2,383 & 323        & 2,896 \\
case2868rte & 2,868 & 596        & 3,808 \\
case3120sp  & 3,120 & 483        & 3,693 \\
case3375wp  & 3,374 & 590        & 4,161 \\
case6468rte & 6,468 & 1,262      & 9,000 \\\bottomrule
\end{tabular}
\label{tab:network}
\end{table}

\subsection{Evaluation of Iterative Method} \label{sec:iter}

The iterative method \ren{is} currently applied in practice \ren{ for MISO and other RTOs. Al}though successful in saving millions in congestion costs over the years, \ren{it} faces challenges in effectiveness and reliability. In this section, we describe our implementation of the iterative method, report the numerical results from the simulation, and highlight common issues encountered.

We implement the iterative method for M2M coordination following the process described in Section~\ref{sec:m2m}, using the \ren{generated} M2M instances \ren{for testing}. The SCED for each RTO during each iteration is solved through a modified UnitCommitment.jl model. Specifically, \ren{as described in model \eqref{model:iter},} the transmission limit for the flowgate $\ell$ in RTO $i$ is modified as
\begin{align}
    \left|\sum_{b \in \mathcal{B}_i} \delta_{b,\ell} \left ( \sum_{g \in \mathcal{G}_i} p_g^i - D_b \right ) \right| \le F_{\ell \rteal{, i}} + s_{\ell,i},
\end{align}
where the excess variable $s_{\ell,i}$ is added to allow uncontrollable flows. This excess variable is then penalized in the objective function using the other RTO's shadow price, ensuring that this RTO's marginal mitigation cost does not exceed the other one's. Additionally, we observed that when the flow direction on the flowgate is negative, the relief request calculated by the formula \eqref{eqn:relief} might be negative. Therefore, we modified the relief request calculation to \ren{be}
\begin{align} \label{eqn:relief2}
    \rteal{R_{\ell} = \abs{\abs{f_{\ell,1} + f_{\ell,2}} - F_{\ell}}} + Adder,
\end{align}
where all values are determined in the same way as in \eqref{eqn:relief}. This modified formula guarantees the relief request sent by MRTO is positive. Furthermore, the maximum iteration number is set to $10$, so the algorithm will terminate either when the shadow prices from the two RTOs converge or when the maximum iteration number is reached.

The simulation results of the iterative method on the M2M instances are reported in Table~\ref{tab:iterative}. \ren{In the table, the ``Network'' column provides the instance ID.} The instances are categorized into two groups: the upper six are standard instances used to validate the current iterative method, while the lower seven, marked by the suffix ``-of'' and ``-ll'' in their IDs for ``opposite-flow'' and ``lower-limit'' instances respectively, are selected to illustrate the challenges encountered with the current approach. The ``Interchange Ratio'' column represents the portion of the total load from MRTO transferred to the NMRTO. The ``Flowgate'' column \rteal{represents the transmission line selected as the flowgate and the contingency line in the parenthesis. It is} shown in the format as ``monitored line (outage line).'' The ``M2M Cost (\$)'' column reports the sum of costs in both MRTO and NMRTO. \ren{T}he ``Central Cost (\$)'' column reports the optimal cost solved by the centralized formulation \eqref{model:cen}. The ``Gap'' column reports \ren{the optimality gap, which is defined as follows:} 
\begin{equation}
    \ren{(\mbox{\it M2M Cost} - \mbox{\it Central Cost}) / \mbox{\it Central Cost} \times 100\%.}
\end{equation}
The ``Curtailment?'' column indicates whether the M2M Cost reported includes a curtailment penalty or not, \ren{which represents the penalty for the amount of load curtailed at all buses as reported in the UnitCommitment.jl package.} Finally, the ``Shadow Prices'' column indicates whether the shadow prices from the two RTOs are equal upon termination. If the two RTOs have the same shadow price on the flowgate on termination, it is marked as ``Converged''; otherwise, it is marked as ``Didn't converge''.

\begin{table*}[ht]
\centering
\caption{Iterative M2M Method Results}
\begin{tabular}{lccrrrcc}
\toprule
Network  & Interchange Ratio & Flowgate &M2M Cost (\$) & Central Cost (\$) & Gap & Curtailment?  & Shadow Prices   \\
\midrule
case1951rte & 0.03 & l1364 (l1102) &  4,118,587.63  &  4,118,780.35  & 0.00\% & No & Converged \\ 
case2383wp & 0.03 & l285 (l323) &  672,352.88  &  512,205.31  & 31.27\% & Yes & Converged \\ 
case2868rte & 0.03 & l2285 (l2288) &  4,364,815.79  &  3,746,785.86  & 16.49\% & Yes & Converged \\ 
case3120sp & 0.03 & l1267 (l2990) &  502,668.63  &  494,387.73  & 1.67\% & No & Converged \\ 
case3375wp & 0.03 & l614 (l615) &  959,276.46  &  959,276.45  & 0.00\% & No & Converged \\ 
case6468rte & 0.03 & l6911 (l7296) &  2,636,956.96  &  2,636,956.90  & 0.00\% & No & Converged \\ \midrule
case1951rte-of & 0.03 & l1777 (l2013) &  Inf  &  4,118,780.35  & Inf & - & - \\ 
case1951rte-ll & 0.04 & l1364 (l1102) &  4,271,516.10  &  4,239,576.94  & 0.75\% & Yes & Didn't converge \\ 
case2383wp-ll & 0.07 & l292 (l294) &  644,205.81  &  584,522.33  & 10.21\% & No & Didn't converge \\ 
case2868rte-ll & 0.03 & l2285 (l2288) &  4,373,150.63  &  3,779,859.13  & 15.70\% & Yes & Converged \\ 
case3120sp-ll & 0.03 & l1609 (l1731) &  509,531.49  &  495,746.67  & 2.78\% & Yes & Didn't converge \\ 
case3375wp-ll & 0.03 & l614 (l615) &  2,150,712.51  &  967,094.45  & 122.39\% & Yes & Didn't converge \\ 
case6468rte-ll & 0.03 & l6911 (l7296) &  2,674,160.46  &  2,642,145.72  & 1.21\% & No & Didn't converge \\
\bottomrule
\end{tabular}
\label{tab:iterative}
\end{table*}

From the table, we can \ren{observe} that three instances (case1951rte, case3375wp, and case6468rte) achieve $0\%$ gaps under the iterative method, and their shadow prices converge upon termination. However, the congestion in the remaining instances \ren{is} not completely mitigated, as they \ren{have} positive \ren{optimality} gaps. For example, the ``case3375wp-ll'' instance exhibits a $122.39\%$ gap due to load curtailment in the iterative method. Moreover, the iterative M2M method fails to provide a feasible solution for the ``opposite-flow'' instance ``case1951rte-of.'' We further study these under-mitigated instances to gain insights into the three challenges the current iterative method face\ren{s}.

\subsubsection{Power Swings} \label{sec:swing}

The first challenge is the ``power swing'' issue, where shadow prices from the two RTOs oscillate indefinitely without converging. Analyzing the iterations, we found that at one point, the NMRTO granted a relief request that exceeded the MRTO's actual need, causing the flowgate at the MRTO to no longer be binding. This led to a zero shadow price at the MRTO. When the NMRTO received this zero shadow price and adopted it as its shadow price limit, it essentially removed the flowgate constraint \rteal{in \eqref{model:iter} for NMRTO because $s_{\ell,2}$ amount is free,} and reverted its dispatch to the previous level, causing congestion again at the MRTO.
This cyclical process repeats, creating the ``power swing'' issue. 

\ren{One example for the power swing case, the ``case2383wp-ll'' network, is} illustrated in Fig.~\ref{fig:swing}. \ren{The} oscillation in shadow prices leads to fluctuations in flow on the flowgate for both RTOs. For example, in \rteal{I}teration $4$, the MRTO, after receiving the granted relief request, adjusts its flowgate limit to $126$ MW (indicated by the red dashed line). However, the new SCED with this updated limit dispatches from $-42$ MW to $-54$ MW on the MRTO side (indicated by the black line), which is not binding. Subsequently, the NMRTO, after receiving the zero shadow price \rteal{from MRTO, has a relaxed} flowgate constraint and \rteal{thus} increases its flow from $-109$ MW to $-188$ MW. This repetitive process results in intermittent congestion, which cannot be resolved by relying solely on the convergence of shadow prices.

\begin{figure}[ht]
    \centering
    \includegraphics[width=\columnwidth]{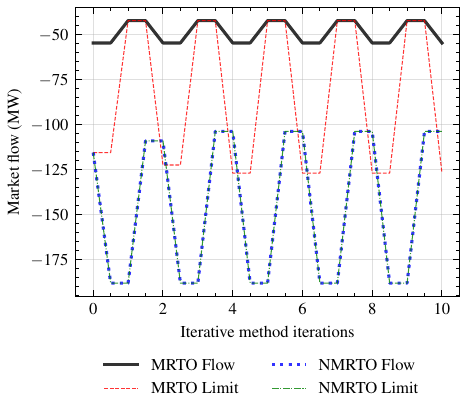}
    \caption{Power swing issue of the iterative method in the ``case2383wp-ll'' instance. The $x$-axis represents the number of iterations in the iterative M2M method, and the $y$-axis represents the flow amount in MW.}
    \label{fig:swing}
\end{figure}

Moreover, in real-world power systems, two systematic time delays can exacerbate this problem: First, the two RTOs do not run SCED synchronously, leading to SCED flows with different timestamps. Second, there is a delay between the SCED and the resource response needed to relieve M2M constraints. These delays cause inaccuracies in the relief request calculation, leading to more severe power swing issues than observed in our simulations.

\subsubsection{Non-Stopping Behavior Despite No Change in Solution}

The second issue concerns the stopping criterion, which relies on the shadow prices of the flowgate from the two RTOs becoming equal. 
However, shadow prices may remain mismatched even when the relief request is not changing anymore. For instance, for \rteal{the} ``case3375wp-ll'' \rteal{case}, the shadow prices for MRTO and NMRTO are \$$43.98$ and \$$0.00$, respectively\rteal{, obtained from the iterative method}, and they remain unchanged until the algorithm terminates due to reaching the maximum number of iterations. Since the iterative method terminates based on shadow price convergence, this issue can lead to endless iterations. This problem typically arises when the generation costs between the two RTOs differ, preventing them from reaching the same shadow price to mitigate congestion. Notably, the instances that achieve zero gaps in Table~\ref{tab:iterative} all converge to a shadow price of \$$0.00$, indicating less severe congestion. 

\subsubsection{Opposite Flows}

The third potential issue is the iterative method's difficulty in handling flowgates with opposite flows from two RTOs. When flows oppose each other, they partially cancel out, resulting in a lower net flow. Properly mitigating congestion in such cases requires one RTO to increase its flow to enhance this cancellation effect. However, the relief request formula \eqref{eqn:relief} described in \cite{chen2017address} or \eqref{eqn:relief2} fail\rteal{s} to convey this need accurately.

The ``opposite-flow'' instance ``case1951rte-of'' \rteal{has been generated and} demonstrates that the flowgate with opposite flows can lead to infeasibility under the iterative method. To illustrate this issue, \rteal{we} consider a flowgate with a capacity of $100$ MW. Suppose the NMRTO has a flow of $100$ MW, while the MRTO has a flow of $-200$ MW. The total flow is $-100$ MW, which is within the limit. If the MRTO anticipates the flow to become $-250$ MW, it initiates an M2M process using the iterative method. The relief request calculation \rteal{following equation \eqref{eqn:relief2}} would be \rteal{$\abs{\abs{-250 + 100} - 100}+0 = 50$ MW with $Adder=0$ for simplicity,} indicating that the MRTO wants the NMRTO to reduce its flow by $50$ MW. However, to effectively mitigate congestion, the MRTO actually needs the NMRTO to increase its flow. This discrepancy highlights the limitations of the current relief request calculations in scenarios with opposite flows.

\subsection{Evaluation of \ren{the} ADMM \ren{Approach}}

\ren{Now}, we \ren{report} our implementation of the ADMM algorithm for M2M coordination and demonstrate its effectiveness in mitigating congestion compared to the current iterative method using the same M2M instances.

\ren{T}he SCED model for each RTO \ren{is built and executed} separately using the augmented Lagrangian relaxation \eqref{model:alr}. After both models are solved in parallel, the flowgate flows and other parameters are updated. This ADMM process continues until the desired \ren{schedule} is reached.

\ren{The ADMM algorithm was implemented to test the same set of M2M instances shown in Table~\ref{tab:iterative}. T}he \ren{optimality} gaps between the ADMM costs and the centralized method \ren{are reported in Table~\ref{tab:admm}. The interchange ratio and flowgate for this experiment are the same as in Table~\ref{tab:iterative}, and thus omitted here for simplicity.} The ``ADMM Gap'' column is calculated as the difference between the ADMM Cost and the Central Cost divided by the Central Cost. With this column showing $0\%$ gaps, we can find that all instances \ren{reach the optimal solution by converging to} the lower bound established by the centralized method. Additionally, the ``Iterations'' column shows the total number of iterations used in the ADMM process, and the ``Time (Seconds)'' column shows the total time used for the ADMM process. 

\begin{table}[ht]
\centering
\caption{ADMM M2M Method Results}
\begin{tabular}{lccc}
\toprule
Network & ADMM Gap & Iterations & Time (Seconds) \\ \midrule
case1951rte & 0.00\% & 17 & 1.14 \\ 
case2383wp & 0.00\% & 45 & 1.70 \\ 
case2868rte & 0.00\% & 253 & 3.79 \\ 
case3120sp & 0.00\% & 60 & 1.62 \\ 
case3375wp & 0.00\% & 6 & 0.91 \\ 
case6468rte & 0.00\% & 109 & 6.71 \\\midrule
case1951rte-of & 0.00\% & 31 & 1.12 \\ 
case1951rte-ll & 0.00\% & 15 & 1.44 \\ 
case2383wp-ll & 0.00\% & 77 & 1.94 \\ 
case2868rte-ll & 0.00\% & 76 & 1.66 \\ 
case3120sp-ll & 0.00\% & 14 & 1.24 \\ 
case3375wp-ll & 0.00\% & 24 & 1.62 \\ 
case6468rte-ll & 0.00\% & 13 & 2.03 \\ 
\bottomrule
\end{tabular}
\label{tab:admm}
\end{table}

The \ren{table} show\ren{s} that our proposed ADMM M2M method addresses the aforementioned three issues with fast convergence. First, the ADMM approach tackles the power swing issue by converging to an optimal solution without the need to set the shadow price limit. This prevents the congestion status from relapsing thus avoiding the power swing issue. For example, as shown in Fig.~\ref{fig:admm}, the ``case2383wp-ll'' lower-limit instance, which experienced power swings under the iterative method, converged to an acceptable feasible solution in $77$ iterations in $1.94$ seconds.

\begin{figure}[ht]
    \centering
    \includegraphics[width=\columnwidth]{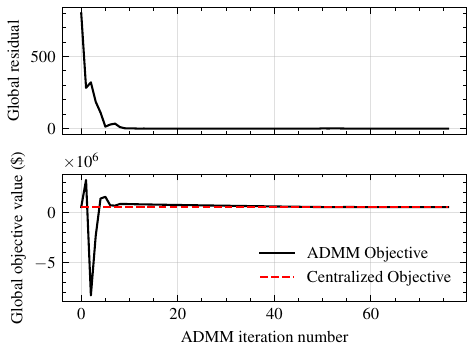}
    \caption{\ren{The} ADMM M2M results for \ren{the} ``case2383wp-ll'' instance. The $x$-axes in both subplots represent the iteration number\ren{s} in the ADMM algorithm, while the $y$-axis represents the global residual in the upper subplot and represents the total cost in the lower subplot. }
    \label{fig:admm}
\end{figure}

Second, the ADMM approach handles the non-stopping issue caused by the shadow price mismatch. The stopping criteria for the ADMM algorithm check the solution feasibility and objective value improvement, instead of the match of shadow prices. Following Proposition~\ref{prop:convergence}, the ADMM algorithm is guaranteed to stop upon an optimal solution and avoid endless iterations. As illustrated in Fig.~\ref{fig:admm-sp}, the shadow prices for both RTOs converge as the global objective value converges to the optimal objective value. The upper subplot illustrates that the ``case3375wp-ll'' instance, which previously faced a non-stopping issue under the iterative method, is resolved optimally using the ADMM approach. \ren{For this case, besides the shadow price converging for each RTO, the shadow prices from both RTOs are equal eventually.} The lower subplot depicts a more typical scenario where the shadow prices do not match even at the optimal M2M solution. \rteal{This observation indicates that} the stopping criteria of the iterative method \rteal{to make both sides of the shadow prices equal at termination cannot even be satisfied by an optimal solution. Meanwhile, it} highlights the advantage of our proposed ADMM approach.
\begin{figure}[ht]
    \centering
    \includegraphics[width=\columnwidth]{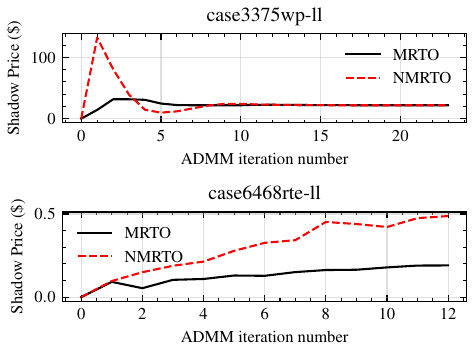}
    \caption{\ren{The} ADMM M2M results for \ren{the} ``case3375wp-ll'' and ``case6468rte-ll'' \ren{instances}. The $x$-axes in both subplots represent the iteration number\ren{s} in the ADMM algorithm, while the $y$-axis in both subplots represents the shadow price of the flowgate for two RTOs.}
    \label{fig:admm-sp}
\end{figure}

Third, the ADMM approach also solves the congested flowgate with opposite flows from \ren{both} RTOs. Since our proposed ADMM algorithm does not calculate an explicit relief request, it does not need to handle the opposite flows differently from \ren{cases where both RTOs have the same flow direction on the flowgate}. Thus, it guarantees to find the optimal congestion mitigation solution for both markets in the opposite flow situation.

\ren{In summary, from the above computational results, it can be observed that} our ADMM formulation for the M2M coordination guarantees a $0\%$ gap with the centralized method and effectively addresses the three issues highlighted in the current iterative method.

\section{Conclusion} \label{sec:conclusion}

In this paper, we addressed the M2M coordination problem by \ren{analyzing} the current iterative method used in practice and \ren{proposing} a decentralized ADMM algorithm to \ren{improve the overall performance}. 

\ren{Our ADMM approach can converge to an optimal solution, as compared to the current practice, which is a heuristic approach and cannot guarantee an optimal solution. Our computational case studies verify the effectiveness of our ADMM approach by showing it converges to a lower bound provided by solving a centralized model, which indicates an optimal solution is obtained. Accordingly, our approach resolves challenging issues, such as power swings, non-convergence, and opposite flow issues faced by the current practice.}

\ren{In future research, we will explore} insights from the ADMM results to improve \ren{current} M2M coordination, refin\ren{e} the iterative method by enhancing the relief request calculation, and extend the proposed approaches to multi-area coordination.

\section*{Acknowledgement}

This work was partially supported by the U.S. Department of Energy Advanced Grid Modeling Program under Grant DE-OE0000875.

\bibliographystyle{IEEEtran}

\bibliography{m2m}

\end{document}